\definecolor{RdYlBu1}{HTML}{D73027}
\definecolor{RdYlBu2}{HTML}{FC8D59}
\definecolor{RdYlBu3}{HTML}{FEE090}
\definecolor{RdYlBu4}{HTML}{FFFFBF}
\definecolor{RdYlBu5}{HTML}{E0F3F8}
\definecolor{RdYlBu6}{HTML}{91BFDB}
\definecolor{RdYlBu7}{HTML}{4575b4}
\definecolor{RdYlGr1}{HTML}{d73027}
\definecolor{RdYlGr2}{HTML}{fc8d59}
\definecolor{RdYlGr3}{HTML}{fee08b}
\definecolor{RdYlGr4}{HTML}{ffffbf}
\definecolor{RdYlGr5}{HTML}{d9ef8b}
\definecolor{RdYlGr6}{HTML}{91cf60}
\definecolor{RdYlGr7}{HTML}{1a9850}
\tikzset{diamond state/.style={draw,diamond}}
\newcommand{\nnint}{\mathbb{Z}_{\geq 0}}
\newtheoremstyle{theoremdd}
  {\topsep}
  {\topsep}
  {\itshape}
  {0pt}
  {\bfseries}
  {.}
  { }
  {\thmname{#1}\thmnumber{ #2}\textnormal{\thmnote{ (#3)}}}
\theoremstyle{theoremdd}
\newtheorem{theorem}{Theorem}
\newtheorem{lemma}{Lemma}
\newtheorem{assumption}{Assumption}
\newtheorem{remark}{Remark}
\newcommand{\set}[1]{\mathcal{#1}} 
\newcommand{\ie}{\textit{i.e.,~}} 
\newcommand{\eg}{\textit{e.g.,~}} 
\newcommand{\inneighbor}[1]{\set{N}_{#1}^{\texttt{in}}}
\newcommand{\outneighbor}[1]{\set{N}_{#1}^{\texttt{out}}}
\newcommand{\indegree}[1]{d_{#1}^{\texttt{in}}}
\newcommand{\outdegree}[1]{d_{#1}^{\texttt{out}}}
\newcommand{\vect}[1]{\mathbf{#1}} 
\newcommand{\grvect}[1]{\boldsymbol{#1}} 
\newenvironment{list4}{
\begin{list}{$\bullet$}{%
    \setlength{\itemsep}{0.05cm}
    \setlength{\labelsep}{0.2cm}
    \setlength{\labelwidth}{0.3cm}
    \setlength{\parsep}{0in} 
    \setlength{\parskip}{0in}
    \setlength{\topsep}{0in} 
    \setlength{\partopsep}{0in}
    \setlength{\leftmargin}{0.18in}}}
{\end{list}}
\title{\LARGE \bf
{Multi-cluster distributed optimization in open multi-agent systems\\ over directed graphs with acknowledgement messages}}
\author{
Evagoras Makridis$^{1,*}$, Gabriele Oliva$^{2}$, and Themistoklis Charalambous$^{1,3}$
\thanks{$^1$Department of Electrical and Computer Engineering, School of Engineering, University of Cyprus, Nicosia, Cyprus.}
\thanks{$^2$Department of Engineering, University Campus Bio-Medico of Rome, Via Alvaro del Portillo, 21 - 00128 Roma, Italy.} 
\thanks{$^3$Department of Electrical Engineering and Automation, School of Electrical Engineering, Aalto University, Espoo, Finland.}
\thanks{$^*$Corresponding author. Email: {\tt makridis.evagoras@ucy.ac.cy}.}
\thanks{This work has been partly funded by MINERVA, a European Research Council (ERC) project funded under the European Union's Horizon 2022 research and innovation programme (Grant agreement No. 101044629).
}}
\begin{document}

\maketitle

\begin{abstract}
In this paper, we tackle the problem of distributed optimization over directed networks in open multi-agent systems (OMAS), where agents may dynamically join or leave, causing persistent changes in network topology and problem dimension. These disruptions not only pose significant challenges to maintaining convergence and stability in distributed optimization algorithms, but could also break the network topology into multiple clusters, each one associated with its own set of objective functions. To address this, we propose a novel Open Distributed Optimization Algorithm with Gradient Tracking (\textsc{Open-GT}), which employs: \emph{(a)} a dynamic mechanism for detecting active out-neighbors through acknowledgement messages, and \emph{(b)} a fully distributed max-consensus procedure to spread information regarding agent departures, in possibly unbalanced directed networks. We show that when all active agents execute \textsc{Open-GT}, the optimization process in each formed cluster remains consistent, while the agents converge to their cluster-wide optimal solution if there exists a time after which the network remains unchanged. Finally, we validate our approach in a simulated environment with dynamically changing agent populations, demonstrating its resilience to network variations and its ability to support distributed optimization under OMAS dynamics.
\end{abstract}

\begin{keywords}
distributed optimization, open multi-agent systems, OMAS, multi-cluster optimization, gradient tracking
\end{keywords}

\section{Introduction}\label{sec:introduction}
Distributed optimization in multi-agent systems (MAS) has attracted significant interest due to applications in sensor networks, robotic coordination, and large-scale machine learning. Agents collaborate to solve a global optimization problem using only local information exchanged with neighbors. Traditional methods typically assume a fixed number of agents and model communication via static undirected \cite{nedic2009distributed} or directed graphs \cite{nedic2014distributed,nedic2017achieving,xi2017add}. However, in realistic scenarios, the agent population changes over time, motivating the shift towards open MAS (OMAS), where agents may join or leave dynamically at runtime.

This dynamic setting introduces several challenges. In particular, while existing works often require persistent network connectivity \cite{hendrickx2020stability,hsieh2021optimization,hayashi2022distributed,vizuete2022resource,de2024random,deplano2025optimization}, in practice, global connectivity may be unnecessary. Instead, agent arrivals and departures naturally induce multiple clusters, each forming a connected sub-network with its own active agents and local minimizer. Thus, a practical focus shifts to ensuring connectivity within each cluster, allowing for flexible and scalable distributed optimization in dynamic OMAS.

\begin{figure}[t]
  \begin{subfigure}[b]{0.5\linewidth}
    \centering
    \includegraphics[scale=0.95]{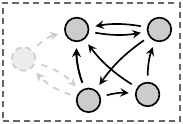} 
    \caption{Initial network} 
    \label{fig:a} 
    \vspace{2ex}
  \end{subfigure}
  \begin{subfigure}[b]{0.5\linewidth}
    \centering
    \includegraphics[scale=0.95]{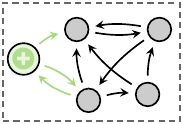}    
    \caption{Agent arrival} 
    \label{fig:b} 
    \vspace{2ex}
  \end{subfigure} 
  \begin{subfigure}[b]{0.5\linewidth}
    \centering
    \includegraphics[scale=0.95]{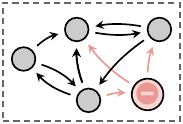}
    \caption{Agent departure} 
    \label{fig:c} 
  \end{subfigure}
  \begin{subfigure}[b]{0.5\linewidth}
    \centering
    \includegraphics[scale=0.95]{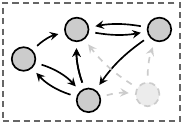}    
    \caption{Stable network} 
    \label{fig:d} 
  \end{subfigure} 
  \caption{Snapshots of an OMAS with directed communication links in which agents join the network, and present ones depart from the network as time progresses.}
  \label{fig:omas_diagram}
\end{figure}

In realistic OMAS settings, agents often experience asymmetric communication due to network constraints or variations in transmission power. While prior works have addressed OMAS under undirected communication \cite{hsieh2021optimization,hayashi2022distributed,liu2024distributed,deplano2025optimization}, such assumptions limit their applicability in dynamic environments where agents may join or leave during runtime.
In contrast, directed communication offers improved scalability and adaptability, but introduces new challenges for algorithm design. Existing results for directed networks largely focus on the distributed average consensus problem; see, \eg \cite{makridis2025average,hadjicostis2025distributed}.
More recently, \cite{sawamura2024distributed} considered distributed optimization in OMAS with directed communication. Although this approach deals with distributed optimization in directed networks of OMAS, it assumes that the active OMAS remains strongly connected throughout the optimization process which might be restrictive in open networks where agents may depart. Additionally, \cite{sawamura2024distributed} presumes agents have prior knowledge of their active out-neighbors, which can be a strong requirement in networks with directional information flow.

In dynamically evolving, asymmetric networks, ensuring consistent information flow is particularly challenging, as agents may become disconnected, limiting their ability to collaboratively optimize towards the true global objective of the active OMAS. Furthermore, when agents are unaware of changes in their out-neighborhood, \eg agent departures, they cannot assign correct communication weights on their outgoing links, leading to incorrect (biased) convergence. Moreover, disconnections may split the network into multiple isolated clusters, within which agents are both unable and no longer obligated to coordinate globally. Consequently, optimization efforts become localized, with each cluster of agents independently optimizing its own objective function rather than collectively optimizing toward the global objective of the overall OMAS.

Towards this direction, we introduce the following contributions. We propose a novel framework for distributed optimization in OMAS that supports dynamic cluster formation, allowing agent departures and arrivals to split or merge the network under directed and time-varying communication links, as illustrated in Fig.~\ref{fig:omas_diagram}. More specifically, we propose a distributed optimization algorithm for clustered OMAS, hereinafter referred to as \textsc{Open-GT}, that optimizes cluster-wide objective functions, while incorporating the exchange of acknowledgment messages to detect local topology changes. These messages enable agents to identify topology changes via max-consensus iterations performed between optimization rounds. Upon detecting such changes, agents can reassign the weights on their outgoing links, ensuring correctness on their optimization variables which are driven towards the new minimizer, despite changes in the network's composition. We support these contributions with theoretical analysis and simulation results, demonstrating the correctness and effectiveness of \textsc{Open-GT} in OMAS settings.


\section{Preliminaries}\label{sec:background}

\subsection{Mathematical Notation}
The sets of real, integer, and natural numbers are denoted as $\mathbb{R}, \mathbb{Z}$, and $\mathbb{N}$, respectively. 
The set of nonnegative integer (real) numbers is denoted as $\mathbb{Z}_{\geq 0}$ ($\mathbb{R}_{\geq 0}$). Matrices are denoted by capital letters, and vectors by small boldface letters. The transpose of a matrix $A$ and a vector $\vect{x}$ are denoted as $A^\top$ and $\vect{x}^\top$, respectively. The all-ones and all-zeros vectors are denoted by $\mathbf{1}$ and $\mathbf{0}$, respectively, with their dimensions being inferred from the context.

\subsection{Open Network Model}\label{sec:open_network_model} 
In this work, we assume that agents are able to join and leave the network at will. For simplicity of exposition, we assume a finite set $\set{V}=\{v_1, \cdots, v_n\}$ that captures all $n=|\set{V}|$ agents potentially active to the network. However, at each time step $k$, only a subset of the agents, denoted by $\set{V}_k \subseteq  \set{V}$, is {\em active}. For analysis purposes, we denote the activation of the agents by a time-varying indicator vector $\grvect{\alpha}_k := \begin{bmatrix}
    \alpha_{1,k},\ldots,\alpha_{n,k}\end{bmatrix}^{\top} \in \{ 0, 1 \}^n$, where $\alpha_{j,k}=1$ if agent $v_j\in\set{V}$ is active at time step $k$ (\ie if $v_j\in \mathcal{V}_k$), while $\alpha_{j,k}=0$, otherwise. Based on this activation vector, the interconnection topology of the active agents in the network is modeled by a (possibly unbalanced) time-varying digraph $\set{G}_k=(\set{V}_k,\set{E}_k)$. The number of agents active in the network at time $k$ is denoted by $n_k=|\set{V}_k|$. The interactions between active agents are captured by the set of digraph edges $\set{E}_k \subseteq \set{V}_k \times \set{V}_k$. A directed edge denoted by $\varepsilon_{ji} \in \set{E}_k$ indicates that node $v_j$ receives information from node $v_i$ at time step $k$. The nodes from which node $v_j$ receives information at time step $k$ are called in-neighbors of node $v_j$ at time step $k$, and belong to the set $\inneighbor{j,k}=\{v_i \in \set{V}_k| \varepsilon_{ji} \in \set{E}_k\}$. The number of nodes in this set is called the current in-degree of node $v_j$ and is denoted by $\indegree{j,k} = |\inneighbor{j,k}|$. The nodes that receive information from node $v_j$ directly at time step $k$ are called out-neighbors of node $v_j$ at time step $k$, and belong to the set $\outneighbor{j,k}=\{v_l \in \set{V}_k| \varepsilon_{lj} \in \set{E}_k\}$. The number of nodes in this set is called the current out-degree of node $v_j$ and is denoted by $\outdegree{j,k}= |\outneighbor{j,k}|$. The potential out-neighbors of node $v_j$ is given by $\outneighbor{j} = \cup_{k=0}^{\infty} \outneighbor{j,k}$. Clearly, each node $v_j \in \set{V}_k$ has immediate access to its own local state.

The structure of the network at each time instant $k\in\nnint$ can be formally defined by the following three subsets that distinguish the operating modes of the agents. 
\begin{list4}
    \item Agents that are active in the network at time $k-1$ and they are still active at time $k$, belong to the set of \emph{remaining agents}, \ie
        \begin{align}
            \set{R}_k = \set{V}_{k-1} \cap \set{V}_{k}.    
        \end{align}
    \item Agents that are inactive at time $k-1$ but are active at time $k$, belong to the set of \emph{joining agents}, \ie
        \begin{align}
        \set{J}_k = \set{V}_{k} \setminus \set{R}_{k}.    
        \end{align}
    \item Agents that are active at time $k-1$ but are inactive at time $k$, belong to the set of \emph{departing agents}, \ie
        \begin{align}
        \set{D}_k = \set{V}_{k-1} \setminus \set{R}_{k}.
        \end{align}
\end{list4}
Notice that, the set of active agents at time $k+1$ is given by $\set{V}_{k+1}= \{ \set{V}_k \setminus \set{D}_{k+1}\} \cup \set{J}_{k+1}$. In general, the set of times when agent $v_j\in\set{V}$ arrives in the network is given by
\begin{align}
    \set{K}_j^{\text{arr}} = \{k\in \nnint \mid v_j \in \set{J}_k\}.
\end{align}
The set of times when agent $v_j\in\set{V}$ departs from the network is given by
\begin{align}
    \set{K}_j^{\text{dep}} = \{k\in \nnint \mid v_j \in \set{D}_k\}.
\end{align}

\subsection{Problem Formulation}
We consider an OMAS whose network composition at time $k \in \nnint$ is represented by a time-varying directed graph $\set{G}_k = (\set{V}_k, \set{E}_k)$. Due to the system’s openness, disjoint clusters $\set{G}_k^q \in \set{Q}$ (with $\set{Q}$ a finite set) may form due to agent departures and arrivals. Each cluster $\set{G}_k^q = (\set{V}_k^q, \set{E}_k^q)$ corresponds to a strongly connected component of the maximal digraph $\set{G}=(\set{V},\set{E})$, where $\set{E}:=\cup_{k=0}^{\infty} \set{E}_k$ is the set of all potential interactions. The set of active times of the cluster $q\in\set{Q}$ is denoted by $\set{K}^q = \{ k \in \nnint \mid \set{V}_k^q \subseteq \set{V}_k \}$. Each cluster has $|\set{V}_k^q|$ agents, with all clusters being disjoint at any $k \in \nnint$, \ie, $\set{V}_k^q \cap \set{V}_k^{q'} = \emptyset$ for $q \neq q'$ and $\cup_{q\in\set{Q}}\set{V}_k^q=\set{V}_k$. Here it is important to note that agents interact only with their immediate neighbors without being aware of the cluster they belong to.

Within this setting, we study a distributed optimization problem in which active agents $v_j \in \set{V}_k$ aim to compute an optimal solution $x_k^{q\ast} \in \mathbb{R}^d$ that minimizes the sum of local cost functions in their current cluster. Each agent $v_j$ holds a possibly time-varying convex local cost function $f_{j,k}: \mathbb{R}^d \to \mathbb{R}$, allowing agents to rejoin the system with updated objectives and new initial conditions. Agents maintain local estimates $x_{j,k} \in \mathbb{R}^d$ and update them iteratively to converge to $x_k^{q\ast}$. That is, agents in cluster $q$ aim to reach consensus on a common minimizer of the sum of their active local costs, \ie
\begin{align}\label{eq:optimization_problem}
    \operatorname*{minimize}\limits_{x\in\mathbb{R}^d} \ f_k^q(x) = \sum_{v_j\in\set{V}_k^q} f_{j,k}(x), 
\end{align}
where $f_k^q$ denotes the sum of local cost functions that belong to cluster $q$ at time $k$.

\begin{remark}
    Clearly, when the topology of the OMAS can only form one cluster $|\set{Q}|=1$, the problem in \eqref{eq:optimization_problem} reduces to the classical distributed optimization problem in OMAS. If a \emph{trivial~cluster} exists, \ie $|\set{V}_k^q|=1$, then the cluster (the agent) solves a local optimization problem based on its local variables and without the need of communication.
\end{remark}

\subsection{Assumptions}
In order to present the proposed distributed algorithm and its analysis, let us now state some assumptions.

\begin{assumption}{\textnormal{(Lipschitz gradient)}}\label{ass:1}
The local function $f_{j,k}$ of agent $v_j\in\set{V}$ is continuously differentiable and has $L_j$-
Lipschitz continuous gradients for all $k\in\mathbb{Z}_{\geq0}$, that is, there exists a constant $L_{j}>0$ such that
\begin{align}
    \|\nabla f_{j,k}(x)-\nabla f_{j,k}(y)\|\leq L_{j}\|x-y\|,
\end{align}
for all $x,y\in\mathbb{R}^d$.
\end{assumption}

\begin{assumption}{\textnormal{(Strong convexity)}}\label{ass:2}
The global objective function $f_k=\sum_{j\in\set{V}} f_{j,k}$ is $\mu$-strongly convex for some $\mu>0$, and all $k\in\nnint$, \ie
\begin{align}
    f_k(y) \geq f_k(x)+\nabla f_k(x)^{\top}(y-x)+\frac{\mu}{2}\|y-x\|_2^2,
\end{align}
for all $x,y\in\mathbb{R}^d$.
\end{assumption}

\begin{assumption}{\textnormal{($\beta$-strongly connected cluster sequence).}}\label{ass:3}
    The sequence of time-varying sub-digraphs which model the agents' interactions within cluster $q$, \ie $\set{G}_k^q=(\set{V}_k^q,\set{E}_k^q)$ is $\beta$\emph{-strongly~connected}, \ie there exists an integer $\beta>0$ such that for all $k\in\set{K}^q$, the union of sub-digraphs
    \begin{align}
        \set{G}_{k:\beta}^q:= \Bigl\{ \set{V}_k^q, \cup_{t=k\beta}^{(k+1)\beta-1} \set{E}^q_t \Bigr\},
    \end{align}
    is strongly connected.
\end{assumption}

Assumption~\ref{ass:3} relaxes the strong connectivity condition used in \cite{sawamura2024distributed}, allowing agents to form disconnected clusters and minimize their cluster-wide objective. For fixed-agent sets with time-varying links, \ie $\set{G}_k = (\set{V}, \set{E}_k)$, such assumption is standard in distributed optimization \cite{nedic2014distributed, nguyen2023accelerated, li2024accelerated, wang2023distributed}.

\begin{assumption}{\textnormal{(Acknowledgement messages)}}\label{ass:4}
At the beginning of each time step $k\in\nnint$, each active agent $v_j\in\set{V}_k$ can send an $1$-bit message $\alpha_{j,k}=1$ to its in-neighbors $v_i\in\inneighbor{j}$, which is only received by the currently active ones $v_i\in\inneighbor{j,k}$.
\end{assumption}

Assumption~\ref{ass:4}, inspired by protocols such as TCP, ARQ/HARQ, and ALOHA, models $1$-bit acknowledgment messages exchanged over dedicated, narrowband control channels that are distinct from the directed data channels. Specifically, for each directed data channel, a corresponding reverse control channel is used to carry these acknowledgments. Over such channels, agents $v_j \in \set{V}_k$ exchange $1$-bit acknowledgments \cite{makridis2023utilizing,makridis2023harnessing} to detect their active out-neighbors $\outneighbor{j,k}$ and compute $\outdegree{j,k} = |\outneighbor{j,k}|$, enabling proper weight assignment $c_{lj,k}$ for information exchange.

\section{\textsc{Open-GT} Algorithm Development}
In this section, we present the \textsc{Open-GT} algorithm which enables agents communicating over directed links to track the minimizer of the cluster they belong to. Each active agent at time $k \in \nnint$ maintains the variables\footnote{We consider scalar state variables for clarity, though the algorithm extends to vector-valued states.} $x_{j,k}, y_{j,k}, w_{j,k} \in \mathbb{R}$ and $h_{j,k} \in \{0,1\}$. The variable $x_{j,k}$ is updated via a gradient step on the local cost function $f_{j,k}(x_{j,k})$ and represents the current estimate of the optimal solution of \eqref{eq:optimization_problem} as perceived by agent $v_j \in \set{V}_k$. The variable $y_{j,k}$ supports consensus by compensating for imbalance in directed communication, while $w_{j,k}$ tracks the gradient of the global objective. The binary variable $h_{j,k}$ is determined through max-consensus rounds between consecutive optimization steps to trigger a reset of the variables of agent $v_j$ when a departure is detected in the network.

\textbf{Arriving mode:}
Each arriving agent $v_j\in\set{J}_{k+1}$ initializes its variables to 
\begin{subequations}\label{eq:opengt_init}
\begin{align}
    x_{j,k+1}&=\widehat{x}_{j,k+1},\\
    y_{j,k+1}&=1,\\
    w_{j,k+1}&=\nabla f_{j,k+1}(x_{j,k+1}),\\
    h_{j,k+1}&=0,
\end{align}    
\end{subequations}
where $\widehat{x}_{j,k+1}\in\mathbb{R}$ denotes its initial estimate regarding the optimal solution when it (re)-joins the network. Moreover, it determines the gradient step-size $\gamma$ which is assumed to be common for all agents in the network, and sets the number of max-consensus iterations $\Lambda=\overline{D}$, where $\overline{D}\geq D$ is an upper-bound on the diameter\footnote{The diameter of the network $D$ is defined as the longest shortest path between any two nodes (respecting the direction of the links).} $D$ of the maximal network.   

\textbf{Remaining mode:} %
At each optimization round $k$, each remaining agent $v_j \in \set{R}_k$ performs $\Lambda$ max-consensus iterations on $h_{j,k}$ to identify its current out-degree and reassign the weights $c_{lj,k} \geq 0$ on its out-going links. After these inner iterations, each agent broadcasts the scaled values $c_{lj,k}( x_{j,k} - \gamma w_{j,k})$, $c_{lj,k}y_{j,k}$, and $c_{lj,k}w_{j,k}$. The \emph{push weights} $c_{lj,k}$ are computed before transmission and typically depend on the agent's out-degree (see \cite{hadjicostis2015robust,charalambous2015distributed} for estimation methods, and \cite{makridis2023utilizing} for feedback-based computation in closed MAS). In OMAS, where out-degrees may vary, a distributed and dynamic mechanism is essential to enable agents track the number of their currently active out-neighbors. Before presenting this mechanism, we formally describe the information exchange and local update procedure for each $v_j \in \set{R}_{k+1}$ under the \textsc{Open-GT} algorithm.

At round $k+1$, each remaining agent $v_j\in\set{R}_{k+1}$ receives the values $c_{ji,k} (x_{i,k} -\gamma w_{i,k})$, $c_{ji,k} y_{i,k}$, and $c_{ji,k} w_{i,k}$ from its currently active in-neighbors $v_i\in\inneighbor{j,k}$, and updates, \ie
\begin{subequations}\label{eq:opengt_update}
\begin{align}
\hspace{-5pt}
    x_{j,k+1} &= \sum_{v_i \in \mathcal{N}_{j,k}^\texttt{in+}} \!\! c_{ji,k} (x_{i,k} -\gamma w_{i,k}), \label{eq:opengt_update_x}\\
\hspace{-5pt}
    y_{j,k+1} &= \sum_{v_i \in \mathcal{N}_{j,k}^\texttt{in+}} \!\! c_{ji,k} y_{i,k}, \label{eq:opengt_update_y}\\
\hspace{-5pt}
    z_{j,k+1} &= x_{j,k+1}/y_{j,k+1}, \label{eq:opengt_update_z}\\
\hspace{-5pt}
    w_{j,k+1} &= g_{j,k+1} \!+\! (1{-}h_{j,k}) \! \Bigg(\! \sum_{v_i \in \mathcal{N}_{j,k}^\texttt{in+}} \!\!\! c_{ji,k} w_{i,k} - g_{j,k} \! \Bigg),\label{eq:opengt_update_w}
\end{align}
\end{subequations}
where $g_{j,k} := \nabla f_{j,k}(z_{j,k})$, and $\mathcal{N}_{j,k}^\texttt{in+}:=\inneighbor{j,k} \cup\{v_j\}$.

\textbf{Dynamic weight assignment:}
In directed networks, remaining agents $v_j \in \set{R}_{k+1}$ may continue sending information to departed out-neighbors $v_l \in \outneighbor{j} \cap \set{D}_{k+1}$ since they are not aware about their departure. This leads to incorrect scaling, as the push weight $c_{lj,k}$ is computed using an outdated out-degree $\outdegree{j,k}$. To address this, we design a mechanism for dynamically detecting active out-neighbors $\outneighbor{j,k}$. Specifically, in each optimization round $k$, before transmitting any variables, each remaining agent $v_j \in \set{R}_k$ receives one-bit acknowledgment messages $\alpha_{l,k} = 1$ from its currently active out-neighbors $v_l \in \outneighbor{j}$, which are used to update its out-degree and assign accurate weights $c_{lj,k}$ as
\begin{align}\label{eq:c-weights}
    c_{lj,k}=\begin{cases}
    \hfil \frac{1}{1+\outdegree{j,k}},\!\!\!& \text{ if } \alpha_{l,k}=1 \lor l=j,\\
    \hfil 0, & \text{ otherwise},
    \end{cases}
\end{align}
where its current out-degree can be computed by summing the received acknowledgement messages over its potential out-neighbors, \ie $\outdegree{j,k}=\sum_{v_l\in\outneighbor{j}} \alpha_{l,k}$.

Although this mechanism would allow all the agents to properly assign their weights, this would not be enough for the algorithm in \eqref{eq:opengt_update} to work. The reason is due to the fact that agents within the same cluster need to know whether there is a departure in the cluster, even if the departure does not involve an agent from their out-neighborhood.  

\textbf{Coordinating departure information:} A missing acknowledgment message signals the departure of an out-neighbor of agent $v_j$. By exchanging this information and performing max-consensus iterations between consecutive optimization rounds, all active agents within the same cluster can reach agreement on the departure event, \ie determine $h_{j,k}$, across the cluster. In the light of \eqref{eq:opengt_update}, $h_{j,k}=1$ triggers a coordinated reset of the gradient tracking variables to $w_{j,k}=\nabla f_{j,k}(z_{j,k})$, for the remaining agents that belong in the same cluster $v_{j,k}\in\set{R}_{k}\cap \set{V}_{k}^q$.

In what follows, we design a fully distributed mechanism capable to spread the information about missing acknowledgement messages (corresponding to departing agents) within the active clusters via max-consensus iterations in between consecutive distributed optimization rounds. More specifically, each remaining agent $v_j\in\set{R}_k$ initiates a new max-consensus procedure for each optimization round $k\geq\nnint$, after receiving the acknowledgement messages from its out-neighbors. Let the max-consensus iteration index be defined by $\ell=0,1,\ldots,\Lambda$. The max-consensus initialization for each agent $v_j\in\set{R}_k$ is given by
\begin{equation}\label{eq:maxconsensus_init}
\overline{h}_{j,0} = \max_{v_l \in \outneighbor{j}} \Big\{ \alpha_{l,k-1}(1 - \alpha_{l,k}) \Big\},
\end{equation}
where $\alpha_{l,k}=1$ is broadcast by $v_l\in\set{V}_k$ at time $k$, and $\alpha_{l,k}=0$ if no message is received from $v_l$. Then, for each iteration $\ell=0,1,\ldots,\Lambda$ agent $v_j$ broadcasts $\overline{h}_{j,\ell}$, receives $\overline{h}_{i,\ell}$ from its currently active in-neighbors $v_i\in\inneighbor{j,k}$, and updates its local detection variable as 
\begin{align}\label{eq:maxconsensus_itr}
    \overline{h}_{j,\ell+1} &= \max_{v_i \in \inneighbor{j,k}}\Big\{\overline{h}_{i,\ell}, \overline{h}_{j,\ell}\Big\}.
\end{align}
By the end of $\Lambda=\overline{D}$ max-consensus iterations, all the active agents within the cluster become aware of the departure, if any, and they set $h_{j,k}=\overline{h}_{j,\Lambda}$. Note that, $\overline{D}$ is assumed to be known by the agents, although one can employ distributed mechanisms to compute $D$ as in \cite{oliva2016distributed}.

Algorithm~\ref{alg:OpenRC} provides a view of the proposed distributed algorithm for an active agent $v_j\in\set{V}_{k+1}$.
\begin{algorithm}[]
\caption{-- \textsc{Open-GT} iterations at agent $v_j\in\set{V}_{k+1}$.}
\label{alg:OpenRC}
{	\begin{algorithmic}[1]
    \Statex \textbf{Input:} Step-size $\gamma$ and diameter upper-bound $\overline{D}$
    \Statex \textbf{for} $k\in\nnint$\textbf{:}
    \Statex\quad \textbf{if} $v_j\in\mathcal{J}_{k+1}$ (arriving)
    \Statex\quad\quad \textbf{Initialize:} $x_{j,k+1}$, $y_{j,k+1}$, $w_{j,k+1}$, $h_{j,k+1}$ via \eqref{eq:opengt_init}
    \Statex\quad \textbf{else if} $v_j\in\mathcal{R}_{k+1}$ (remaining)
    \Statex\quad\quad \parbox[t]{210pt}{\textbf{Assign weights:} $c_{lj,k}$ via \eqref{eq:c-weights}\strut}
    \Statex\quad\quad \parbox[t]{210pt}{\textbf{Coordinate departure information:} determine $h_{j,k}$ via \eqref{eq:maxconsensus_init}, \eqref{eq:maxconsensus_itr}\strut}
    \Statex\quad\quad \parbox[t]{210pt}{\textbf{Receive:} $c_{ji,k}(x_{i,k}-\gamma w_{i,k})$, $c_{ji,k}y_{i,k}$, $c_{ji,k} w_{i,k}$ and $h_{i,k}$ from each in-neighbor $v_i\in\inneighbor{j,k}$\strut}
    \Statex\quad\quad \parbox[t]{210pt}{\textbf{Transmit:} $c_{lj,k}(x_{j,k}-\gamma w_{j,k})$, $c_{lj,k}y_{j,k}$, $c_{lj,k}w_{j,k}$ and $h_{j,k}$ to each out-neighbor $v_l\in\outneighbor{j,k}$\strut}
    \Statex\quad\quad \parbox[t]{210pt}{\textbf{Update:} $x_{j,k+1}$, $y_{j,k+1}$, and $w_{j,k+1}$ via \eqref{eq:opengt_update}\strut}
    \Statex\quad \textbf{end if}
    \Statex \textbf{Output:} $z_{j,k+1}=x_{j,k+1}/y_{j,k+1}$.
	\end{algorithmic}
}
\end{algorithm}

For analysis purposes, we proceed by stacking the variables of all potentially active agents in the maximal network $\set{G}=(\set{V},\set{E})$, into the column vectors
\begin{align}
    \vect{x}_k &:= \begin{bmatrix}
        x_{1,k},~x_{2,k},~\ldots,~x_{n,k}
    \end{bmatrix}^{\top}
\end{align}
and respectively for $\vect{y}_k$, $\vect{z}_k$, $\vect{w}_k$, and $\vect{g}_k$. Moreover, we stack all agents initial states when joining into $\widehat{\vect{x}}_k:=\begin{bmatrix}  \widehat{x}_{1,k},~\widehat{x}_{2,k},~\ldots,~\widehat{x}_{n,k}
    \end{bmatrix}^{\top}$. Further defining the arrival trigger for agent $v_j\in\set{J}_k$ as $\psi_{j,k}:=\alpha_{j,k}(1-\alpha_{j,k-1})$, and stacking for each agent we have $\grvect{\psi}_k:=\begin{bmatrix}
        \psi_{1,k},~\psi_{2,k},~\ldots,~\psi_{n,k}
    \end{bmatrix}^{\top}$. Similarly, we can stack the coordinated departure information of all agents into $\grvect{\eta}_k:=\begin{bmatrix}
        h_{1,k},~h_{2,k},~\ldots,~h_{n,k}
    \end{bmatrix}^{\top}$. Now we can define the matrices $H_k := \operatorname{diag}(\grvect{\eta}_k)$ and $\Psi_k := \operatorname{diag}(\vect{\grvect{\psi}}_k)$. Moreover, for analysis, one can collect all the push-weights to form the nonnegative matrix $C_k:=\{c_{lj,k}\} \in \mathbb{R}_{\geq0}^{n \times n}$, whose columns that correspond to active agents sum up to $1$, while all the entries that correspond to inactive agents are $0$. Now, the equivalent vector-matrix representation of the proposed algorithm for all potentially active agents is given by:
\begin{subequations}\label{eq:opengt}
    \begin{align}
    \vect{x}_{k+1} &= (I-\Psi_{k+1})C_k(\vect{x}_{k} - \gamma \vect{w}_{k}) + \Psi_{k+1} \widehat{\vect{x}}_{k+1},\\
    \vect{y}_{k+1} &= (I-\Psi_{k+1})C_k \vect{y}_{k} + \Psi_{k+1}\mathbf{1},\\
    \vect{z}_{k+1} &= (\operatorname{diag}(\vect{y_{k+1}))^{-1}} \vect{x}_{k+1},\\
    \vect{w}_{k+1} &= (I-H_k) (C_k \vect{w}_{k} - \vect{g}_k) + \vect{g}_{k+1}.
    \end{align}
\end{subequations}

\section{Analysis}
In the OMAS setting, explicit convergence cannot be guaranteed due to the perturbations introduced by agent arrivals and departures, which continually shift cluster minimizers. Instead, we analyze the \emph{eventual~correctness} of the iterations in \eqref{eq:opengt}, by examining whether agents' estimates track the evolving minimizer of their current cluster. Assuming that the OMAS becomes stable after some finite time $k^{\prime}$, we study the convergence behavior during this stable period.

The following lemma shows that the sum of the local $w$-iterates in \eqref{eq:opengt_update_w} is equal to the sum of the local gradients $g_{j,k}=\nabla f_{j,k}(z_{j,k})$ at any time $k\in\nnint$, using the column stochasticity property of the weights $c_{ij,k}$ that correspond to active agents in any given cluster $q\in\set{Q}$.
\begin{lemma}{}\label{lemma:1}
    Let Assumptions~\ref{ass:3} and \ref{ass:4} hold. Then the following holds for each cluster $q\in\set{Q}$ at any $k\in\set{K}^q$
    \begin{align}\label{lem:lem1}
        \sum_{v_j\in\set{V}^q_{k}} w_{j,k} = \sum_{v_j\in\set{V}^q_{k}} g_{j,k}.
    \end{align}
    \begin{proof}
        We prove the lemma by induction. First observe that, $\sum_{v_j\in\set{J}^q_{k}}w_{j,k} = \sum_{v_j\in\set{J}^q_k} g_{j,k}$ by the (re)-initialization of $v_j\in\set{J}^q_k$. Now, assuming that \eqref{lem:lem1} holds true for a given $k$, we have to show that it holds true at $k+1$.
        \begin{align*}
            \sum_{v_j\in\set{R}^q_{k+1}} w_{j,k+1} &= \sum_{v_j\in\set{R}^q_{k+1}} \left( \sum_{v_i\in\set{V}^q_{k}} c_{ji,k} v_{i,k} + g_{j,k+1} - g_{j,k} \right) \\ 
            &= \sum_{v_j\in\set{V}^q_{k}} w_{j,k} +\!\!\! \sum_{v_j\in\set{R}^q_{k+1}} g_{j,k+1} - \!\!\!\sum_{v_j\in\set{R}^q_{k+1}} g_{j,k}\\
            &= \sum_{v_j\in\set{V}^q_{k}} g_{j,k} + \!\!\!\sum_{v_j\in\set{R}^q_{k+1}} g_{j,k+1} - \!\!\!\sum_{v_j\in\set{R}^q_{k+1}} g_{j,k}\\
            &= \sum_{v_j\in\set{R}^q_{k+1}} g_{j,k+1}.
        \end{align*}
    Thus, since
    \begin{align*}
        \sum_{v_j\in\set{V}^q_{k+1}} w_{j,k+1} &= \sum_{v_j\in\set{R}^q_{k+1}} w_{j,k+1} + \sum_{v_j\in\set{J}^q_{k+1}} w_{j,k+1}\\
        &= \sum_{v_j\in\set{V}^q_{k+1}} g_{j,k+1},
    \end{align*}
    the proof is complete.
    \end{proof}
\end{lemma}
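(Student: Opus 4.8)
The plan is to prove \eqref{lem:lem1} by induction on the optimization rounds $k \in \set{K}^q$, exploiting the decomposition $\set{V}^q_{k+1} = \set{R}^q_{k+1} \cup \set{J}^q_{k+1}$ into remaining and (re)joining agents. For the base case I would read off the arriving-mode initialization \eqref{eq:opengt_init}: each $v_j \in \set{J}^q_k$ sets $y_{j,k}=1$, hence $z_{j,k}=x_{j,k}$, and $w_{j,k}=\nabla f_{j,k}(x_{j,k})=g_{j,k}$, so the identity holds termwise at every arrival and in particular when the cluster first becomes populated. Thus $\sum_{v_j \in \set{J}^q_k} w_{j,k} = \sum_{v_j \in \set{J}^q_k} g_{j,k}$ unconditionally.

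For the inductive step, assume \eqref{lem:lem1} at round $k$. Since the joining agents in $\set{J}^q_{k+1}$ again satisfy $w_{j,k+1}=g_{j,k+1}$ by initialization, it suffices to control $\sum_{v_j \in \set{R}^q_{k+1}} w_{j,k+1}$. I would branch on the departure flag $h_{j,k}$, which the max-consensus iterations \eqref{eq:maxconsensus_init}--\eqref{eq:maxconsensus_itr} drive to a common value across the strongly connected cluster (Assumption~\ref{ass:3}). If $h_{j,k}=1$, the update \eqref{eq:opengt_update_w} collapses to $w_{j,k+1}=g_{j,k+1}$ and the identity is immediate. If $h_{j,k}=0$, I substitute $w_{j,k+1}=\sum_{v_i \in \set{V}^q_k} c_{ji,k} w_{i,k} + g_{j,k+1} - g_{j,k}$, interchange the two summations, and collapse the weighted double sum via the column stochasticity $\sum_{v_j \in \set{R}^q_{k+1}} c_{ji,k}=1$ of the push weights restricted to the cluster; this replaces it with $\sum_{v_i \in \set{V}^q_k} w_{i,k}$, whereupon the inductive hypothesis turns it into $\sum_{v_i \in \set{V}^q_k} g_{i,k}$. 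The residual gradient terms telescope: in the no-departure branch $\set{V}^q_k = \set{R}^q_{k+1}$, so $\sum_{v_i \in \set{V}^q_k} g_{i,k} - \sum_{v_j \in \set{R}^q_{k+1}} g_{j,k}=0$ and only $\sum_{v_j \in \set{R}^q_{k+1}} g_{j,k+1}$ survives. Adding back the joining contribution yields $\sum_{v_j \in \set{V}^q_{k+1}} w_{j,k+1} = \sum_{v_j \in \set{V}^q_{k+1}} g_{j,k+1}$, closing the induction.

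The step I expect to be the main obstacle is justifying the column-stochastic collapse $\sum_{v_j \in \set{R}^q_{k+1}} c_{ji,k}=1$ together with the set identity $\set{V}^q_k = \set{R}^q_{k+1}$ on the event $h_{j,k}=0$. This is exactly where the acknowledgment mechanism (Assumption~\ref{ass:4}) and the clustering hypothesis do the work: rule \eqref{eq:c-weights} normalizes each agent's outgoing mass by its \emph{detected} out-degree, so no weight is placed on links toward undetected out-neighbors, and the disjointness and isolation of the strongly connected clusters guarantees that no mass crosses cluster boundaries. I would therefore need to argue that a genuine departure necessarily flips $h_{j,k}$ to $1$ cluster-wide, so that the lossy update is diverted into the reset branch, while the absence of a propagated departure forces the receiving out-neighbors of each $v_i \in \set{V}^q_k$ to lie in $\set{R}^q_{k+1}$, making the telescoping exact. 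Pinning down this synchronization between the detected out-degree, the reset flag, and the bookkeeping of $\set{R}, \set{J}, \set{D}$ is the delicate part; the remaining index manipulations are routine.
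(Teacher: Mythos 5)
Your proposal is correct and follows essentially the same route as the paper's proof: induction on $k$, the decomposition $\set{V}^q_{k+1}=\set{R}^q_{k+1}\cup\set{J}^q_{k+1}$ with the joining agents handled by the (re)-initialization \eqref{eq:opengt_init}, and the collapse of the weighted double sum via column stochasticity of the push weights restricted to the cluster. The one substantive difference is that you explicitly branch on the departure flag: the paper's displayed computation silently assumes $h_{j,k}=0$ and $\set{V}^q_k=\set{R}^q_{k+1}$ (so that $\sum_{v_j\in\set{V}^q_k}g_{j,k}$ cancels against $\sum_{v_j\in\set{R}^q_{k+1}}g_{j,k}$), whereas you correctly observe that when a departure occurs this cancellation fails and the identity is instead rescued by the reset branch $h_{j,k}=1$, which forces $w_{j,k+1}=g_{j,k+1}$ outright; this makes your argument a more complete version of the same proof.
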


Now suppose that at some $k^{\prime}$, the arrivals and departures of agents are stopped. The formed clusters are now fixed in terms of agents, and they include only the remaining agents $v_j\in\set{R}_{k^{\prime}}$. Hence, for $k \geq k^{\prime}$ we have $\set{V}_k=\set{V}=\set{R}_{k^{\prime}}$. In what follows we analyze the convergence of the remaining agents for each cluster independently. Let us further define the variables of the remaining agents in each cluster $q\in\set{Q}$. Let $\vect{x}_{k+1}^q$, $\vect{y}_{k+1}^q$, $\vect{z}_{k+1}^q$, and $\vect{w}_{k+1}^q$ the stack of variables maintained by the agents in cluster $q\in\set{Q}$ during $k\in\set{K}^q$. 

The vector-matrix form of the iterations of the agents that remain in cluster $q$ for $k\geq k^{\prime}$ is now reduced to the one of the Push-DIGing algorithm in \cite{nedic2017achieving} for closed networks with time-varying links. Hence the iterations in \eqref{eq:opengt} for the cluster $q$ and for $k\geq k^{\prime}$, gives $\Psi_{k}^q=I$ and $H_{k}^q=I$, and the iterations of the agents in the cluster become:
\begin{subequations}\label{eq:push-diging}
    \begin{align}
    \vect{x}_{k+1}^q &= C_k^q (\vect{x}_{k}^q - \gamma \vect{w}_{k}^q),\\
    \vect{y}_{k+1}^q &= C_k^q \vect{y}_{k}^q,\\
    \vect{z}_{k+1}^q &= (\operatorname{diag}(\vect{y_{k+1}^q))^{-1}} \vect{x}_{k+1}^q,\\
    \vect{w}_{k+1}^q &= C_k^q \vect{w}_{k}^q + \vect{g}_{k+1}^q - \vect{g}_k^q, 
    \end{align}
\end{subequations}
where $C_k^q$ is a column-stochastic matrix with weights assigned as in \eqref{eq:c-weights} for the cluster $q$. Note that the gradients $\vect{g}_{k+1}^q$ and $\vect{g}_{k}^q$ are evaluated at $\vect{z}_{k+1}^q$ and $\vect{z}_{k}^q$, respectively.

\begin{theorem}
    Assume that there exists a time $k^{\prime}$ when agents stop arriving and/or departing the network. Then, the \textsc{Open-GT} algorithm in \eqref{eq:opengt} is reduced in multiple instances of \eqref{eq:push-diging} for each formed cluster $q\in\set{Q}$, each converging to $x^{q\ast}$ asymptotically as $k\rightarrow\infty$, by choosing a sufficiently small gradient step-size $\gamma>0$.
\end{theorem}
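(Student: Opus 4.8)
The plan is to prove the statement by reducing the stabilized \textsc{Open-GT} dynamics to a decoupled family of Push-DIGing recursions, one per cluster, and then invoking the geometric convergence guarantee established for Push-DIGing in \cite{nedic2017achieving}. First I would argue that the reduction from \eqref{eq:opengt} to \eqref{eq:push-diging} is \emph{exact} for all $k \geq k'$. Since no agent arrives after $k'$, the arrival trigger $\psi_{j,k}=\alpha_{j,k}(1-\alpha_{j,k-1})$ vanishes, so $\Psi_k=0$ and $I-\Psi_k=I$. Since no agent departs, every initialization \eqref{eq:maxconsensus_init} yields $\overline{h}_{j,0}=0$, and because $\Lambda=\overline{D}$ exceeds the diameter of each strongly connected cluster, the recursion \eqref{eq:maxconsensus_itr} returns $h_{j,k}=0$ for every active agent; hence $H_k=0$ and $I-H_k=I$. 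Combined with the disjointness of the clusters and the resulting block structure of the column-stochastic matrix $C_k$ restricted to each fixed vertex set $\set{V}^q$, the global recursion decouples into the independent per-cluster systems \eqref{eq:push-diging}.

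Next I would verify that each per-cluster system satisfies the hypotheses of the Push-DIGing convergence theorem. The mixing matrices $C_k^q$ are column-stochastic with strictly positive diagonal and nonzero entries uniformly bounded below, by \eqref{eq:c-weights}, while the underlying sub-digraph sequence is $\beta$-strongly connected by Assumption~\ref{ass:3}. The cluster objective $f^q=\sum_{v_j\in\set{V}^q} f_{j}$ has Lipschitz-continuous gradient with constant at most $\sum_{v_j\in\set{V}^q} L_j$ by Assumption~\ref{ass:1}, and is strongly convex (the delicate point, discussed below). Crucially, Lemma~\ref{lemma:1} supplies the gradient-tracking invariant $\sum_{v_j\in\set{V}^q} w_{j,k}=\sum_{v_j\in\set{V}^q} g_{j,k}$, which persists for $k\geq k'$ and is precisely the property that lets the stacked variable $\vect{w}_k^q$ track the aggregate gradient of $f^q$ along the push-sum ratio $\vect{z}_k^q$.

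With these hypotheses in place, I would apply the result of \cite{nedic2017achieving} directly: for each cluster $q$ there exists a threshold $\overline{\gamma}^q>0$ such that, for any $0<\gamma\leq\overline{\gamma}^q$, the iterates $\vect{z}_k^q$ converge R-linearly to the cluster minimizer $x^{q\ast}$. Because $\set{Q}$ is finite, taking $\gamma\leq\min_{q\in\set{Q}}\overline{\gamma}^q$ yields a single step-size valid for all clusters simultaneously. The entire history before $k'$ enters only through the bounded state $(\vect{x}_{k'}^q,\vect{y}_{k'}^q,\vect{w}_{k'}^q)$, which serves as an arbitrary finite initial condition for \eqref{eq:push-diging}; since Push-DIGing converges from any bounded initialization, the pre-stabilization transient affects transient behavior only and not the asymptotic limit.

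The hard part will be two points. The first is the justification that the departure-detection layer is inactive after stabilization: one must verify that a full block of $\overline{D}$ max-consensus iterations completes within a single optimization round so that $h_{j,k}=0$ holds \emph{exactly}, not merely eventually, for $k\geq k'$, which is where the bound $\overline{D}\geq D$ and the strong connectivity of each cluster are essential. The second, more delicate point is the strong convexity of the per-cluster objective $f^q$: Assumption~\ref{ass:2} posits strong convexity only of the global sum $f_k$, whereas the Push-DIGing guarantee requires each cluster objective to be strongly convex. Reconciling these---either by strengthening Assumption~\ref{ass:2} to hold cluster-wise, or by combining convexity of the individual $f_{j,k}$ with a per-cluster strong-convexity hypothesis---is the main gap to close before the cited convergence result can be invoked verbatim.
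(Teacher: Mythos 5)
Your proposal takes essentially the same route as the paper: the paper's entire proof is a one-line appeal to \cite{nedic2017achieving} ``invoking Lemma~\ref{lemma:1} and Assumptions~\ref{ass:1}--\ref{ass:4},'' and your reduction to per-cluster Push-DIGing followed by citing its R-linear convergence is exactly that argument, spelled out. Two of your side observations are worth keeping: first, the matrices should indeed satisfy $\Psi_k=0$ and $H_k=0$ after stabilization (the paper writes $\Psi_k^q=I$ and $H_k^q=I$, which is inconsistent with the reduced recursion \eqref{eq:push-diging} and is evidently a typo for the statement that $I-\Psi_k$ and $I-H_k$ equal $I$); second, your concern about strong convexity is genuine --- Assumption~\ref{ass:2} only asserts $\mu$-strong convexity of the \emph{global} sum $f_k=\sum_{j\in\set{V}}f_{j,k}$, which does not by itself imply strong convexity of each cluster objective $f_k^q$ (a sum of a strict subset of convex functions need not inherit strong convexity), so the paper's invocation of \cite{nedic2017achieving} per cluster implicitly requires the cluster-wise strengthening you identify.
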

\begin{proof}
Follows similar arguments \emph{mutatis mutandis} to those in \cite{nedic2017achieving}, invoking Lemma~\ref{lemma:1} and Assumptions~\ref{ass:1}-\ref{ass:4}.
\end{proof}

\section{Simulation Results}
In this section, we evaluate the proposed algorithm through numerical simulations. We consider an OMAS with a maximal topology shown in Fig.~\ref{fig:simulation_digraph}, where only agents $v_3$, $v_4$, and $v_5$ are allowed to join or leave at runtime. Notably, if $v_4$ departs at any time $k \in \mathbb{Z}_{\geq0}$, the network splits into two strongly connected clusters, regardless of the presence of $v_3$ or $v_5$. Each joining agent $v_j \in \set{J}_k$ initializes its state as $x_{j,k} = \widehat{x}_{j,k} \in [1,5]$ (chosen uniformly at random), $y_{j,k} = 1$, $w_{j,k} = \nabla f_{j,k}(x_{j,k})$, and $h_{j,k} = 0$.
\begin{figure}[h!]
\centering
    \includegraphics[scale=0.85]{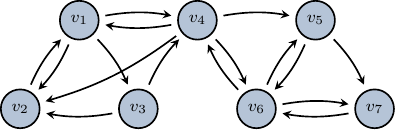}
    \caption{The maximal directed network topology $\set{G}=(\set{V},\set{E})$ consisting of $n=7$ agents, for an OMAS described by the digraph $\set{G}_k=(\set{V}_k,\set{E}_k)$.}
    \label{fig:simulation_digraph}
\end{figure}

In Fig.~\ref{fig:simulation_results} we present the evolution of the OMAS considered in Fig.~\ref{fig:simulation_digraph} when active agents execute the \textsc{Open-GT} algorithm. The top plot shows the activation vector $\vect{\alpha}_k$, indicating when each agent $v_j \in \set{V}_k$ is active, with different colors representing dynamically formed clusters. The middle plot illustrates the estimate $z_{j,k}$ of each active agent, potentially belonging to different clusters. The bottom plot shows the optimality error $\lVert \vect{z}_k - \vect{1}x_k^{q\ast} \rVert_2$ for each cluster $q \in \set{Q}$. When agent $v_4$ is active, the network becomes connected, forming a single cluster with a global minimizer, as seen in the intervals $[1,80]$ and $[310,420]$. In the latter, the minimizer shifts due to the departure of agent $v_5$ and re-arrival of agent $v_3$. When $v_4$ is inactive, the network splits into multiple clusters, each with its own minimizer $x_k^{q\ast}$. Throughout, agents exchange information and update their variables via \eqref{eq:opengt_update} with local neighbors, aiming to minimize their cluster objective. As expected from Theorem~1, during stable periods (\ie no arrivals or departures), the optimality error decreases and agent estimates $z_{j,k}$ converge toward the cluster minimizer $x_k^{q\ast}$. Transient spikes in $z_{j,k}$ may occur due to the influence of arriving or departing agents, particularly those with high connectivity or strong local cost contributions.

\begin{figure}[h!]
    \includegraphics[scale=0.82]{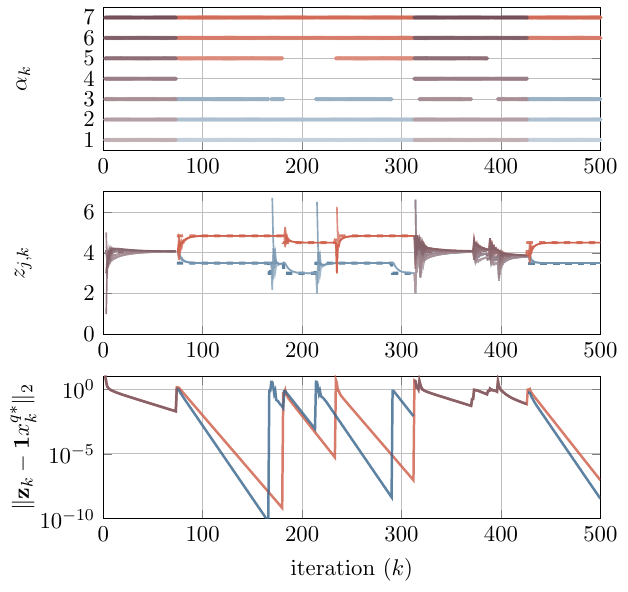}
    \caption{Agents' evolution executing the \textsc{Open-GT} algorithm.}
    \label{fig:simulation_results}
\end{figure}

\section{Conclusions}
In this work, we addressed the distributed optimization in open multi-agent systems with directed, time-varying communication. We proposed \textsc{Open-GT}, a novel algorithm that enables agents to adapt to dynamic topology changes using acknowledgment messages to detect active out-neighbors and a max-consensus mechanism to propagate departure information. Our preliminary analysis shows that \textsc{Open-GT} maintains consistency within dynamically formed clusters and converges to the cluster-wide minimizer once cluster membership stabilizes. Simulations further demonstrate the algorithm’s robustness to network variations, highlighting its practical effectiveness in dynamic OMAS environments.

\bibliographystyle{IEEEtran}
\bibliography{references}

\end{document}